\newcommand{\red}[1]{\textcolor{red}{#1}}
\newcommand{\blue}[1]{\textcolor{blue}{#1}}
\newcommand{\mathscr}[1]{\ensuremath{\mathcal{#1}}}
\newcommand{\zt}{\ensuremath{\mathbb{Z}_2}}
\newcommand{\conj}{\ensuremath{\epsilon}}
\newcommand{\ham}{\ensuremath{\sigma}}
\newcommand{\xfields}{\ensuremath{\phi}}
\newcommand{\zfields}{\ensuremath{\eta}}
\newcommand{\hmodule}{\ensuremath{F}}
\newcommand{\pmodule}{\ensuremath{P}}
\newcommand{\emodule}{\ensuremath{E}}
\newcommand{\proj}{\ensuremath{P}}
\newcommand{\projmod}{\ensuremath{\pi}}
\newcommand{\hilb}{\ensuremath{\mathbb{H}}}
\newcommand{\transp}{\ensuremath{\text{T}}}
\newcommand{\vect}[1]{\ensuremath{{\underline{\mathbf{#1}}}}}
\newcommand{\basis}[1]{\ensuremath{\hat{e}_{#1}}}
\newcommand{\projx}{\ensuremath{P}}
\newcommand{\gauge}{\ensuremath{G}}
\newcommand{\opgauge}{\ensuremath{\mathcal{G}}}
\newcommand{\bdyn}{\ensuremath{\mathcal{B}}}
\newcommand{\opproj}{\ensuremath{\mathcal{P}}}
\newcommand{\supp}{\ensuremath{\text{supp}}}
\newcommand{\kernn}{\ensuremath{\mu}}
\newcommand{\disent}{\ensuremath{U_D}}
\newcommand{\symset}{\ensuremath{M}}
\newcommand{\tr}{\ensuremath{\text{Tr}}}
\newcommand{\sdim}{\ensuremath{d}}
\newcommand{\symtyp}{\ensuremath{T}}
\newcommand{\drawgenerator}[8]{%
\xymatrix@!0{%
& #8 \ar@{-}[ld]\ar@{.}[dd] \ar@{-}[rr] & & #7 \ar@{-}[ld]  \\%
#1 \ar@{-}[rr] \ar@{-}[dd] &  & #2 \ar@{-}[dd] &            \\%
& #6 \ar@{.}[ld] &  & #5 \ar@{-}[uu] \ar@{.}[ll]       \\%
#3 \ar@{-}[rr] &  & #4 \ar@{-}[ru]                       %
}%
}
\newcommand{\drawgeneratorcluster}[9]{%
\xymatrix@!0{%
& #8 \ar@{-}[ld]\ar@{.}[dd] \ar@{-}[rr] & & #7 \ar@{-}[ld]  \\%
#1 \ar@{-}[rr] \ar@{-}[dd] & & #2 \ar@{-}[dd] &            \\%
& #6 \ar@{.}[ld] & #9  & #5 \ar@{-}[uu] \ar@{.}[ll]       \\%
#3 \ar@{-}[rr] &  & #4 \ar@{-}[ru]                       %
}%
}
\newcommand{\drawgeneratorclustertwoa}[9]{%
\xymatrix@!0{%
& #8 \ar@{.}[ld]\ar@{.}[d] \ar@{.}[rrr] & & & #7 \ar@{.}[ld]  \\%
#1 \ar@{.}[rrr] \ar@{.}[ddd] & & \ar@{-}[u] & #2 \ar@{.}[ddd] &            \\%
& \ar@{.}[u] & #9 \ar@{-}[ll] \ar@{-}[r] \ar@{-}[u] \ar@{-}[dd] & \ar@{-}[r] &       \\%
& #6 \ar@{.}[ld] \ar@{.}[r] \ar@{.}[u]  &  \ar@{.}[r] &   & #5 \ar@{.}[uuu] \ar@{.}[l]       \\%
#3 \ar@{.}[rrr] & & & #4 \ar@{.}[ru]                       %
}%
}
\newcommand{\drawgeneratorclustertwob}[9]{%
\xymatrix@!0{%
& #8 \ar@{-}[ld]\ar@{-}[d] \ar@{-}[rrr] & & & #7 \ar@{-}[ld]  \\%
#1 \ar@{-}[rrr] \ar@{-}[ddd] & & \ar@{.}[u] & #2 \ar@{-}[ddd] &            \\%
& \ar@{-}[u] & #9 \ar@{.}[ll] \ar@{.}[r] \ar@{.}[u] \ar@{.}[dd] & \ar@{.}[r] &       \\%
& #6 \ar@{-}[ld] \ar@{-}[r] \ar@{-}[u]  &  \ar@{-}[r] &   & #5 \ar@{-}[uuu] \ar@{-}[l]       \\%
#3 \ar@{-}[rrr] & & & #4 \ar@{-}[ru]                       %
}%
}
\newtheorem{lemma}{Lemma}
\newtheorem{claim}{Proposition}
\begin{document}

\title{Fractal symmetries: Ungauging the cubic code}

\author{Dominic J. \surname{Williamson}}
\affiliation{Vienna Center for Quantum Technology, University of Vienna, Boltzmanngasse
5, 1090 Vienna, Austria}

\begin{abstract}
Gauging is a ubiquitous tool in many-body physics. It allows one to construct highly entangled topological phases of matter from relatively simple phases and to relate certain characteristics of the two. 
Here we develop a gauging procedure for general submanifold symmetries of Pauli Hamiltonians, including symmetries of fractal type. We show a relation between the pre- and post-gauging models and use this to construct short-range entangled phases with fractal-like symmetries, one of which is mapped to the cubic code by the gauging. 
\end{abstract}

\maketitle

\section{ Introduction}
\label{intro}

The study of topological order~\cite{einarsson,Wen90} intertwines many rich areas of physics: strongly correlated quantum many-body condensed-matter systems~\cite{kitaev2006anyons,levin2005string}, quantum codes~\cite{shor1996fault,preskill1998reliable,kitaev2003fault}, topological quantum field theory (TQFT)~\cite{witten1988topological,atiyah1988topological,walker1991witten}, and modular tensor categories~\cite{etingof2005fusion} (and their higher categorical generalizations~\cite{baez1995higher,lurie2009classification,kong2014braided,higherdto}). By viewing the same physics through these complementary lenses valuable insights have been gained.

In two dimensions a concise and deep understanding of topological order has solidified in terms of the theory of anyonic excitations~\cite{wilczek1982magnetic,wilczek1982remarks,moore1989classical,witten1989quantum,turaev1994quantum,kitaev2006anyons,nick,MPOpaper} described mathematically by modular tensor categories~\cite{etingof2005fusion} known to be equivalent to (3-2-1)-extended topological quantum field theories~\cite{bakalov2001lectures}. The theory of stabilizer codes in two dimensions is also well understood~\cite{yoshida2011classification,bombin2014structure}. It is known that all two-dimensional (2D) stabilizer Hamiltonians possess stringlike logical operators and hence are not self-correcting quantum memories at finite temperature~\cite{bravyi2009no}, and furthermore that they are equivalent to some number of copies of the toric code~\cite{bombin2012universal,kubica2015unfolding}. 

In three dimensions the landscape of possibilities remains shrouded in mystery. Progress has been made via the construction of families of fixed point Hamiltonians~\cite{walker20123+,williamson2016hamiltonian} and the development of novel tools such as the 3-loop braiding statistics~\cite{wang2014braiding} (which primarily apply to gauge theories with a possibly anomalous 2D topological boundary) but a general understanding is still lacking. On the other hand novel contributions have been made in assessing the possibility of self-correcting three dimensional (3D) quantum memories~\cite{brell2014proposal,haah2011local,bravyi2013quantum} but a definitive consensus has not been reached. For a nice overview of progress on this topic see the recent review article Ref.\cite{brown2014quantum}. Most significantly this search has revealed models~\cite{haah2011local,yoshida2013exotic} that satisfy conventional definitions of topological order and stability and yet are not described by any (conventional) TQFT. Rather than being fixed points under real space blocking renormalization group flow they may bifurcate into multiple copies of themselves~\cite{haah2014bifurcation}.

The idea of gauging pervades the literature on topological order in condensed matter systems~\cite{wilson1974confinement,kogut1975hamiltonian,levin2012braiding,Gaugingpaper}. This process makes global symmetries local while allowing one to relate certain physical properties of the pre- and post-gauged systems~\cite{levin2012braiding,Gaugingpaper,williamson2014matrix}. This is most commonly applied to a truly global and on site symmetry, although it has also been adapted to higher form symmetries~\cite{gaiotto2015generalized,yoshida2015topological} that are important in the classification of higher dimensional phases of matter~\cite{kong2014braided,thorngren2015higher}. 

In this paper we develop a framework for gauging submanifold symmetries, including those of fractal type, using the language of translationally invariant stabilizer Hamiltonians~\cite{gottesman1997stabilizer,haah2013commuting,haah2013lattice}. We then demonstrate relations between physical characteristics of the pre- and post- gauged models. 
Our formalism includes exotic examples such as Haah's cubic code~\cite{haah2011local} and more conventional examples such as generalized toric codes. 
We go on to use the tools developed to construct novel cluster states with fractal-type symmetries.

The results presented here share many similarities with those in the concurrent work Ref.\cite{Haahnew}. 
In Ref.\cite{Haahnew} the authors also develop a gauging duality map and apply it to study what they call fracton topological orders, meaning those with pointlike excitations that are not created by stringlike operators. 
The fundamental idea underlying the construction of their gauging map, which they call ``\emph{F-S}'' duality, is the introduction of gauge degrees of freedom to mediate many-body interactions.  This is identical to the ideology of our approach, although the execution and applications differ. 
They explicitly describe their gauging duality for many examples of classical spin models with what they call subsystem symmetries (which we refer to as submanifold symmetries) including Haah's cubic code~\cite{haah2011local} and the model due to Chamon, Bravyi, Leemhuis and Terhal~\cite{chamon2005quantum,bravyi2011topological}. 
While the set of examples they consider differs from those here, they are all equally well described using our formalism. 
Moreover the gauging duality map of Ref.\cite{Haahnew} is defined only for classical Ising models and does not allow any local symmetries, which excludes them from gauging the 1-form symmetry of the toric code for example. An advantage of our approach is that it allows one to gauge any quantum model that is described by a spatially local Hamiltonian with a given submanifold symmetry; this goes beyond the classical (diagonal in the computational basis) Ising models considered in Ref.\cite{Haahnew}.

\section{Background}

 In this section we recount several basic notions from the polynomial formalism developed in Refs.\cite{haah2013lattice,haah2013commuting}.

Our focus is on Pauli Hamiltonians that are local, translation invariant and consist of a sum of terms that are each tensor products of exclusively Pauli $X$ or $Z$ matrices (We shall loosen the last requirement somewhat in the next section). We use the language of polynomials developed by Haah in Refs.\cite{haah2013lattice,haah2013commuting} as it provides a succinct description of the operators in this setting. 
In Haah's formalism a Pauli operator is specified by a column of polynomials over $\zt$. For a translationally invariant system with $\sdim$ spatial dimensions each lattice site is specified by a vector $\vect{i}\in\mathbb{Z}^\sdim$; when there are $Q$ qubits per site a single qubit is specified by a pair $(\vect{i},q)$ for $q\in\{1,\dots,Q\}$. A general Pauli operator is then mapped to a column of length $2Q$ with a multivariate polynomial over $\mathbb{Z}_2$ in each entry as follows:
\begin{align}
\bigotimes_{\vect{i},q} X^{p^q_{\vect{i}}}_{\vect{i},q} \bigotimes_{\vect{i},q} Z^{r^q_{\vect{i}}}_{\vect{i},q} \mapsto 
\begin{pmatrix}
 \mathbf{p}
\\ \hline
\mathbf{r}
 \end{pmatrix}
\end{align}
where $\mathbf{p}=(p^1,\dots,p^Q)$ is a column consisting of entries $p^q$ which are multivariate polynomials over $\zt$ whose $\vect{x}^\vect{i}$ coefficient is given by $p^q_\vect{i}\in\{0,1\}$ (we are using multi-index notation) i.e.
\begin{align}
p^q=\sum_{\vect{i}\in\mathbb{Z}^d}  p^q_\vect{i}\ x_1^{i_1} \cdots x_Q^{i_Q}
\end{align}
with similar notation for $\mathbf{r}$.
For example on a two dimensional lattice with two types of qubits $r,b$ per site the operator $X_{(0,0),r}X_{(0,1),r}X_{(1,1),b}Z_{(1,0),r}$  is specified by the polynomial $(1+y,  xy,  \,\vline  x, 0)$ as shown in by
\begin{align}
 \xymatrix@!0{
\red{X}\blue{I}  \ar@{-}[r]  & \red{I}\blue{X}\ar@{-}[d]\\
\red{X}\blue{I} \ar@{-}[u]   & \red{Z}\blue{I}\ar@{-}[l] 
 } 
 \quad \quad
  \quad \quad
  \xymatrix@!0{
 y \ar@{-}[r]  & xy \ar@{-}[d]\\
 1 \ar@{-}[u]   & x \ar@{-}[l] 
 }
\end{align}
A pair of Pauli operators 
$\begin{pmatrix}
 \mathbf{p}
\\ \hline
\mathbf{r}
 \end{pmatrix},\begin{pmatrix}
 \mathbf{s}
\\ \hline
\mathbf{v}
 \end{pmatrix}$ commutes iff their symplectic inner product is zero, i.e.
\begin{align}
\left[
\begin{pmatrix}
\bar{\mathbf{s}}^\transp  &\vline \  \bar{\mathbf{v}}^\transp 
\end{pmatrix}
\lambda_Q
\begin{pmatrix}
 \mathbf{p}
\\ \hline
\mathbf{r}
 \end{pmatrix}
 \right]_{\vect{0}} =0
 \end{align}
where $\lambda_Q:=(ZX)\otimes \openone_Q $ is the relevant symplectic form, $\bar{s}$ is the antipode map (sending each monomial summand to its inverse) and the subscript $[\cdot]_\vect{0}$ denotes the constant term of the polynomial. 
For convenience we also define the conjugation operation $\begin{pmatrix}
 \mathbf{s}
\\ \hline
\mathbf{v}
 \end{pmatrix}^\dagger  := {(\bar{\mathbf{s}}^\transp \, \vline \, \bar{\mathbf{v}}^\transp)}$. 

In Haah's formalism the Hamiltonian is identified with a module generated by the stabilizers on the unit cell. More specifically let $\hmodule$ be a free module of rank $T$, think of this as the set of position labels for individual stabilizer terms, and $\pmodule$ be the module of Pauli matrices on the lattice. The Hamiltonian module with $T$ types of local interaction terms $\left\{\begin{pmatrix}
 \mathbf{p}_t
\\ \hline
\mathbf{r}_t
 \end{pmatrix}\right\}_t$ is generated by 
$$\ham:=\begin{pmatrix}
\mathbf{p}_1 & \dots & \mathbf{p}_T
\\
\hline
\mathbf{r}_1 & \dots & \mathbf{r}_T
\end{pmatrix}
$$
 which maps $\ham : \hmodule \rightarrow \pmodule$. 
 Its symplectic conjugate 
$\conj:=\ham^\dagger \lambda_Q$ maps $\conj:\pmodule \rightarrow \emodule$, that is from the Pauli module $\pmodule$ to the virtual excitation module $\emodule$; think of this as the positions of various stabilizer terms that anticommute with a given Pauli operator. 
The condition that the Hamiltonian is commuting and hence defines a stabilizer code is simply $\conj \ham = 0 $ which is equivalent to the sequence 
$$\begin{CD} \hmodule @>\ham >> \pmodule @>\conj>> \emodule\end{CD}$$ 
forming a complex. 
It was shown in Refs.\cite{haah2013commuting,haah2013lattice} that the stabilizer Hamiltonian is topologically ordered if the aforementioned sequence is exact, i.e. $\text{im}(\sigma)=\text{ker}(\epsilon)$. 

In the case of Calderbank-Shor-Steane (CSS) codes this complex breaks up into a direct sum since we have $\ham=\ham_X\oplus\ham_Z$ and the commutation condition becomes ${\ham}_Z^\dagger\ham_X=0={\ham}_X^\dagger \ham_Z$.

In terms of the bipartite interaction graph of the Hamiltonian, $\ham$ can be thought of as mapping from a Hamiltonian node to the qubit nodes in its support (as an operator) and $\epsilon$ can be though of as mapping from a qubit node to the adjacent Hamiltonian nodes with which a Pauli operator on that qudit anticommutes. Note we in fact need to add extra structure to distinguish the $X$ and $Z$ terms in the interaction graph above, alternatively if the Hamiltonian is CSS we can consider separate $X$ and $Z$ interaction graphs, corresponding to $\ham_X$ and $\ham_Z$, and the only relevant operators are then either $Z$ or $X$ respectively.

\section{Gauging}

In this section we build up a procedure for gauging submanifold symmetries and analyze the important properties of this gauging map. 
We start by specifying the type of translationally invariant, symmetric, local Hamiltonians we treat. We then move on to the definition of the gauging procedure and proofs of several results that demonstrate its key features. Finally we describe the relationship between the gauging procedure and translationally invariant, local, CSS stabilizer Hamiltonians and give a construction of cluster state~\cite{briegel2001persistent} models with submanifold symmetries.

\subsection{Hamiltonian construction}

Consider a system of ``matter'' degrees of freedom ($Q$ qubits per site) with Hilbert space $\hilb_\text{m}$ governed by a translationally invariant local Hamiltonian 
$$H_\text{m}=\sum\limits_{\vect{i}\in \mathbb{Z}^d}\sum\limits_{k} h_{\vect{i},k}$$
 with a family of on-site symmetry operators generated by a tensor product of $X$ on qubits contained in each closed submanifold of some fixed but arbitrary dimension, possibly fractals with noninteger dimension (note these manifolds only appear as discretizations with a minimum length scale cutoff). 
 
We only consider hypercube local Hamiltonian terms, hence the $X$ symmetry can be specified by hypercube local constraints expressed as products of $Z$ fields that commute with any symmetry operator. 
These $Z$ constraints can be understood as locally checking whether an operator, which is a tensor product of $X$s, has the shape of an appropriate submanifold on which it is a symmetry.
Associated with these checks is a fundamental object in our framework, the map 
$$\zfields:\hmodule_\symtyp\rightarrow P$$
 from $\hmodule_\symtyp$, a free module of rank $\symtyp$, to $P$, the Pauli module ($\symtyp$ is the number of independent local $Z$ constraints).

In addition to sums and products of these $Z$ constraint fields, a symmetric Hamiltonian may contain arbitrary $X$ perturbations. There are two important irreducible types of $X$ fields, single site $X$ fields and hypercube local $X$ fields that commute with the $Z$ constraints. 
Hence the set of symmetric field perturbations we consider break up into the local $Z$ constraint fields described by $\zfields$, the single $X$ terms described by $\openone_Q$ and possibly a number $S_X$ of additional $X$ fields described by a map 
$$\xfields: \hmodule_{S_X} \rightarrow P$$
 from a free module of rank $S_X$ to the Pauli module, which satisfies $\xfields^\dagger \lambda_Q \zfields =0$.
These local $X$ fields that commute with the $Z$ constraints are in fact local symmetries of the model, we will largely ignore them for the time being as they become trivial after gauging.

In summary we are considering Hamiltonians that commute with a set of tensor product $X$ operators which might best be described as a \emph{locally defined} symmetry. Such symmetries are concretely defined in terms of a chain complex
\begin{equation}
\begin{CD} \hmodule_{S_X} @>\xfields >> \pmodule @>\zfields^\dagger \lambda_Q>> \hmodule_T \end{CD}
\end{equation}
with local symmetries given by the image of $\xfields$ and equivalence classes of global symmetries given by the distinct homology classes of the sequence. Note this homology description of a locally defined, tensor product $X$ symmetry is very general and does not rely on a translationally invariant structure or a fixed spatial dimension. This may prove interesting for future work.

An illustrative example is a generalized toric code in $d$ spatial dimensions with qubits on $k$-cells, $X$ stabilizers on $(k-1)$-cells and $Z$ stabilizers on $(k+1)$-cells. This model has $(d-k)$-manifold ($k$-form) $X$ symmetry, specified by the local $Z$ stabilizer constraints on the unit cell and their translations (note this analysis extends to an arbitrary cellulation of a closed $d$-manifold). See Sec.~\ref{toric code} for a more detailed description of the 2D case.

\subsection{Gauging procedure}\label{gauging procedure}

In this section we follow and generalize the approach of Ref.\cite{Gaugingpaper} to produce a (nearly) unambiguous gauging procedure for quantum states and operators with submanifold symmetry.

To gauge the Hamiltonian $H_\text{m}$ we must first specify  the gauge degrees of freedom. We extend the canonical choice for gauging a conventional $k$-form symmetry, given by associating a gauge field to each $(k+1)$-cell, with a recipe that also deals with more exotic cases. The gauge and matter Hilbert space $\hilb_\text{m}\otimes\hilb_\text{g}$ is built by tensoring in a gauge qubit for each $Z$ constraint field, i.e. each label in $\hmodule_\symtyp$. 
The locality of this system is described by the bipartite interaction graph of the $Z$ constraint fields, which is generated by $\zfields$. 
Hence $\zfields$ can now be thought of as the map from gauge qubits to neighbouring matter qubits, $\zfields: \hmodule_{T} \rightarrow \hmodule_{Q}$, we will continue to use this definition below.

The next ingredient in the gauging procedure is a set of local constraints that project onto states satisfying a $\mathbb{Z}_2$ ``Gauss law''. This law states that the charge on each matter qubit equals the sum of the fields on the neighbouring gauge qubits.
The local gauge constraints are generated by the map
\begin{align}
\projmod :=
\begin{pmatrix}
\mathbf{\openone_Q} \\
\zfields^\dagger \\
\hline
0
\\
0
\end{pmatrix}.
\end{align} 
Specifically each constraint  is given by a projector 
$$\proj_{\vect{i},q}:=\frac{1}{2}(\projx_{\vect{i},q}(0)+\projx_{\vect{i},q}(1) \,)$$
 onto the +1 eigenspace of the Pauli operator $\projmod\, \vect{x}^\vect{i}\, \basis{q}$,
 which is identified with $\projx_{\vect{i},q}(1)$,
  where $\basis{q}$ is the column with a $1$ in the $q$th entry and zeros elsewhere and $\projx_{\vect{i},q}(0):=\openone$. 
  The full projector onto the gauge invariant subspace is then given by the product of these local projectors 
  $$\proj=\prod\limits_{\vect{i},q} \proj_{\vect{i},q}.$$
The state gauging map $\gauge:\hilb_\text{m} \rightarrow \hilb_\text{m}\otimes\hilb_\text{g}$ is given by 
$$\gauge \ket{\psi} := \proj \ket{\psi} \otimes \ket{0}^{\otimes N T}$$
 where $N$ is the number of unit cells in the system. 
 
 The local projection of an operator onto the gauge invariant subspace is given by 
$$\opproj_\Gamma[\cdot]:= \sum\limits_{\mathcal{S}_\Gamma} \bigotimes\limits_{\vect{i},q} \projx_{\vect{i},q}(s_\vect{i}^q)\big{|}_{\Gamma} \ [\cdot] \  \bigotimes\limits_{\vect{i},q} \projx_{\vect{i},q}(s_\vect{i}^q)\big{|}_{\Gamma}$$
 where $\big{|}_{\Gamma}$ denotes the restriction of an operator onto the qubits within the region $\Gamma$ and 
 the sum is over the set of variables ${\mathcal{S}_\Gamma:=\{s_\vect{i}^q\}\in\{0,1\}^{|\Gamma|}}$.
Then the corresponding operator gauging map is defined by 
$$\opgauge[O]:=\opproj_\Gamma[O\bigotimes\limits_{v\in \Gamma}\ket{0}\bra{0}_v]$$ 
where $\Gamma$ is a (minimal) region containing $\supp(O)$ that is generated by a set of points corresponding to gauge qubits in the interaction graph and their neighbours under $\zfields$. 

The matter Hamiltonian is gauged in a locality preserving way as follows: 
$$H_\text{m}^{\opgauge}:=\sum \limits_{\vect{i},k} \opgauge[h_{\vect{i},k}]\, .$$ 
To construct the full gauged Hamiltonian in a nontrivial way we must also specify some fields $H_\bdyn$ that describe dynamics of the gauge spins. 
We will introduce these fields in the zero gauge coupling limit where the gauge degrees of freedom are frozen to have `zero magnetic flux' (this is analogous to a flatness condition on a finite group connection) perturbations away from this point are considered later. Note the $H_\bdyn$ fields are only defined within the gauge invariant subspace and so should commute with all local gauge constrains. 
The $Z$ fields commuting with all $\proj_{\vect{i},q}$ are precisely those described by polynomials in the kernel of $\zfields$, in addition we require them to be a set of independent generators that are local to a hypercube and are hence described by a map $\kernn$ which generates the kernel of $\zfields$.  

The full gauged Hamiltonian is then given by
$$H_\text{full}:=H_\text{m}^\opgauge+\Delta_\bdyn H_\bdyn + \Delta_{\proj} H_\proj$$ 
where $\Delta_\bdyn,\Delta_P>0$
 and $H_P$ is the sum of all local gauge projectors. From the definitions of the various Hamiltonian terms one can see that 
 $$[H_\text{m},H_\bdyn]=[H_\bdyn,H_P]=[H_P,H_\text{m}]=0.$$ 
 For $\Delta_P$ sufficiently large the low energy subspace of this Hamiltonian is gauge invariant, with a true gauge theory being recovered in the limit $\Delta_P\rightarrow \infty$. When $\Delta_\bdyn$ is also sufficiently large the states relevant to the low energy physics are those within the gauge invariant subspace that also have `flat' gauge connections (specified by the $\kernn$ constraints).

This full gauged Hamiltonian is equivalent, via a constant depth circuit $\disent$ of local isometries, to another Hamiltonian where the gauge has been fixed to remove the local gauge constraints thus restoring a clear tensor product structure to the gauge invariant physics. 
The circuit is constructed from a product of controlled-$X$ gates from each matter qubit to each of its adjacent gauge qubits (under the map 
$\zfields^\dagger$). Note this unitary disentangles each local gauge constraint $\proj_{\vect{i},q}$ such that it becomes a projector onto the $\ket{+}$-state of the single qubit at site $(\vect{i},q)$. Hence the full disentangling isometry $\disent$ is given by 
\begin{align}
\disent:=\bigotimes_{\vect{i},q}\bra{+} \, \prod_{\vect{i},q} \, \prod_{\vect{x}^\vect{j} \basis{t} \in \zfields^\dagger \vect{x}^\vect{i}\basis{q}} CX_{(\vect{i},q)\rightarrow (\vect{j},t)} \, .
\end{align}
Now the disentangled Hamiltonian, which acts purely on the gauge qubits remaining, is given by 
$$\disent H_\text{full} \disent^\dagger=\hat{H}_\text{m}^\opgauge+\Delta_\bdyn H_\bdyn$$ 
where $\hat{H}_\text{m}^\opgauge=\sum \limits_{\vect{i},k} \disent \opgauge_{\Gamma^k_{\vect{i}}}[h_{k,\vect{i}}] \disent^\dagger$ is again a sum of local terms. 

We close the section by giving a summary of the full gauging and disentangling procedure in terms of its effect on local symmetric Pauli terms. In the polynomial language these are as follows
\begin{align}
\vect{x}^\vect{i} \basis{q} &\mapsto \zfields^\dagger \lambda_Q \vect{x}^\vect{i} \basis{q} \label{xeq1}
\\
\zfields \vect{x}^\vect{i} \basis{t} 
&\mapsto 
\vect{x}^\vect{i} \basis{t} \label{zeq1} .
\end{align}
Eq.\eqref{xeq1} describes the mapping of a single qubit $X$ to a product of $X$ terms on the neighbouring gauge qubits and Eq.\eqref{zeq1} describes the mapping of a minimal symmetric $Z$ field (which is necessarily generated by $\zfields$) to a single $Z$ on the corresponding gauge qubit. Note these mappings suffice to describe the transformation of all local symmetric tensor products of Pauli matrices. 
In general the choice of $\mathbf{p}$ generating a set of symmetric $Z$ fields $\zfields \mathbf{p}$ may not be unique since the kernel of $\zfields$ may be nontrivial. The exact term obtained is determined by the local support set $\Gamma$ that is chosen when gauging the symmetric $Z$ fields. All choices for gauging this term are related by some local fields in $H_\bdyn$ since it is generated by $\kernn$ which also generates the kernel of $\zfields$. Hence all such choices have an equivalent action upon the ground space provided $\Delta_\bdyn$ is sufficiently large.

In summary we have constructed a bipartite graph determined by the $X$ symmetry which is specified in the polynomial language by $\zfields$. The gauging procedure sends single site $X$ terms to a product of $X$s on the neighbouring gauge bits and symmetric $Z$ terms (which necessarily lie in the image of $\zfields$) to a local term in their preimage under $\zfields$. Additional local $Z$ fields were also introduced in terms of the map $\kernn$ which generates the kernel of $\zfields$. Hence we have a CSS stabilizer Hamiltonian specified by the \emph{gauging complex }
\begin{align}
\begin{CD}
\hmodule @>\ham>> \pmodule @>\conj>> \emodule
\end{CD}
\end{align}
where $\ham=\zfields^\dagger \oplus \kernn$ and $\conj=\ham^\dagger\lambda_Q$. Note the additional $Z$ terms are noncommuting perturbations to this code Hamiltonian.

\subsection{Basic properties of the gauging procedure}

The mantra of gauging is `global symmetry to local symmetry'. This is made precise in the gauging procedure above as follows; any symmetry specified by a subset of qubits $\symset$ of the original model 
$$X(\symset):=\bigotimes\limits_{(\vect{i},q)\in\symset} X_{\vect{i},q}$$
 can be reconstructed from the local symmetries of the gauged model, i.e.
 $$\prod\limits_{(\vect{i},q)\in\symset} \projx_{\vect{i},q}(1) = X(\symset)\otimes \openone_\text{g}$$
 where $\text{g}$ indicates the gauge subsystem.

Gauging in the zero coupling limit (described above in terms of the maps $\gauge,\,\opgauge$) provides an equivalence between the gauged and ungauged models in that the operator gauging map is invertible (in a sense) and furthermore all symmetric expectation values are preserved. We proceed to show this below. 
For the remainder of the section we use the labeling convention that $(\vect{j},t)$ are gauge qubits while $(\vect{i},q)$ are matter qubits. 

\begin{claim} \label{lem1}
The operator gauging map is invertible for symmetric operators $O$ in the following sense 
$\emph{\tr}_{(\vect{j},t)\in\Gamma} \left(\opgauge[O] \cdot \bigotimes_{(\vect{j},t)\in\Gamma} \ket{0}\bra{0} \right)=O$.
\end{claim}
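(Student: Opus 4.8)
The plan is to unpack the definitions of $\opgauge$ and the local operator projector $\opproj_\Gamma$ and show that the partial trace against $\bigotimes_{(\vect{j},t)\in\Gamma}\ket{0}\bra{0}$ on the gauge qubits collapses the double sum over the variables $\mathcal{S}_\Gamma$ back down to the original operator. Recall that $\opgauge[O]=\opproj_\Gamma\!\left[O\bigotimes_{v\in\Gamma}\ket{0}\bra{0}_v\right]$, where the ancillary $\ket{0}\bra{0}$'s are placed on the \emph{gauge} qubits of $\Gamma$, and $\opproj_\Gamma$ conjugates by $\bigotimes_{\vect{i},q}\projx_{\vect{i},q}(s^q_\vect{i})|_\Gamma$ summed over all $s\in\{0,1\}^{|\Gamma|}$. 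First I would split the string of Pauli operators $\projx_{\vect{i},q}(s)$ appearing in each summand into its matter part and its gauge part according to the Gauss-law generator $\projmod=(\openone_Q,\zfields^\dagger\,|\,0,0)^\transp$: on a matter qubit it acts by $X^{s}$, and on a gauge qubit it acts by $X$ raised to the corresponding component of $\zfields^\dagger \vect{x}^\vect{i}\basis{q}$.

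Next I would compute $\tr_{(\vect{j},t)\in\Gamma}\big(\projx(s)\,[\,\cdot\,]\,\projx(s')\cdot\bigotimes\ket{0}\bra{0}\big)$ gauge-qubit by gauge-qubit. For a single gauge qubit, $\tr\big(X^{a}\ket{0}\bra{0}X^{b}\big)=\delta_{a,b}$. Summing the product of these Kronecker deltas over all the gauge-sector variables forces, for every gauge qubit $(\vect{j},t)$, the equality of the two exponents $\sum_{(\vect{i},q)}(\zfields^\dagger)_{(\vect{j},t),(\vect{i},q)}\,s^q_\vect{i}=\sum_{(\vect{i},q)}(\zfields^\dagger)_{(\vect{j},t),(\vect{i},q)}\,s'^q_\vect{i}$. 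I would then argue that, after also summing over the matter-sector variables (which are unconstrained and produce an overall normalisation), the only surviving contributions are those where $s$ and $s'$ are \emph{globally} related by an element of $\ker(\zfields^\dagger)^{\text{T}}$-type relations — and then use the symmetry of $O$ together with the structure of $\opproj$ to see that such ``gauge-equivalent'' choices act identically, so the double sum collapses to a single projector-conjugated copy of $O\bigotimes\ket{0}\bra{0}$ on the matter qubits, i.e.\ to $\tfrac{1}{2^{|\Gamma|}}\sum_{\mathcal{S}}\bigotimes\projx_{\vect{i},q}(s)\,O\,\bigotimes\projx_{\vect{i},q}(s)$. Finally, since $O$ is symmetric it commutes with each $\bigotimes_{\vect{i},q}\projx_{\vect{i},q}(s)$ (these are exactly the reconstructed global symmetry operators $X(\symset)$ of the discussion preceding the proposition), and each such operator squares to the identity, so every summand equals $O$ and the average is simply $O$.

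The bookkeeping to watch is the interplay between the normalisation constants: the $\tfrac12$ in each local projector $\proj_{\vect{i},q}$, the size of the sum $\mathcal{S}_\Gamma$, and the fact that the partial trace against $\ket{0}\bra{0}$ on the gauge qubits is not trace-preserving. I expect these factors to cancel exactly, but I would keep careful track of powers of $2$ indexed by the number of matter versus gauge qubits in $\Gamma$. The main obstacle, I anticipate, is making rigorous the claim that the residual double sum (after the gauge-qubit Kronecker deltas are imposed) truly collapses rather than merely partially collapses: one must verify that any pair $(s,s')$ surviving the constraints gives $\bigotimes\projx(s)\,O\,\bigotimes\projx(s)=\bigotimes\projx(s')\,O\,\bigotimes\projx(s')$, which is where the symmetry hypothesis on $O$ and the precise definition of the support region $\Gamma$ (generated by gauge qubits and their $\zfields$-neighbours, so that no ``dangling'' matter qubit spoils the cancellation) both enter essentially. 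Everything else is a routine, if slightly tedious, computation with Pauli strings and computational-basis states.
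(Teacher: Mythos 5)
Your overall strategy is the right one and matches the paper's: expand the definitions, trace the gauge qubits against $\bigotimes\ket{0}\bra{0}$ to obtain parity ($\delta$) constraints on the conjugating variables, observe that the surviving configurations make the conjugating $X$-string a symmetry, and invoke the symmetry of $O$ to kill the conjugation. However, two linked missteps derail the execution. First, $\opproj_\Gamma$ is a \emph{single} sum, $\sum_{\mathcal{S}_\Gamma}\projx(s)\,[\cdot]\,\projx(s)$, with the same variables on both sides; it is a twirl by Pauli operators, not the expansion of $\proj|_\Gamma[\cdot]\proj|_\Gamma$ into a double sum over independent $s$ and $s'$. Your double-sum reading changes the superoperator being analysed. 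Second, the trace you must evaluate includes the $\bigotimes_{(\vect{j},t)\in\Gamma}\ket{0}\bra{0}$ from the statement of the proposition: per gauge qubit it is $\tr\bigl(X^{a}\ket{0}\bra{0}X^{a}\,\ket{0}\bra{0}\bigr)=\delta_{a,0}$, not $\tr\bigl(X^{a}\ket{0}\bra{0}X^{b}\bigr)=\delta_{a,b}$. The $\delta_{a,0}$ constraint is essential: it forces $\sum_{\zfields^\dagger\vect{x}^{\vect{i}}\basis{q}\ni\vect{x}^{\vect{j}}\basis{t}}s^{q}_{\vect{i}}=0$ at every gauge qubit of $\Gamma$, which is exactly the statement that the matter-sector string $\bigotimes X^{s}_{\vect{i},q}$ is a symmetry, whence each surviving term is $X^{s}OX^{s}=O$. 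Under your weaker $\delta_{a,b}$ constraint the surviving pairs give $X^{s}OX^{s'}=OX^{s+s'}$ with $X^{s+s'}$ a (possibly nontrivial) symmetry operator, and these do \emph{not} individually reduce to $O$; the ``residual double-sum collapse'' you flag as the main obstacle is therefore not a technicality to be tidied up but a sign that the computation has gone off the rails.

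With the correct single-sum definition the anticipated obstacle evaporates and the proof is three lines, as in the paper. Your bookkeeping worries about factors of $\tfrac12$ are also moot: the $\projx_{\vect{i},q}(s)$ appearing in $\opproj_\Gamma$ are Pauli operators, not projectors, so no normalisation constants enter. The one residual counting issue that is genuinely present --- if more than one $s\in\mathcal{S}_\Gamma$ satisfies all the $\delta$ constraints the sum yields a multiple of $O$ rather than $O$ itself --- is controlled by the minimality of $\Gamma$ (no local symmetry supported inside $\Gamma$ other than the trivial one); you are right to be alert to it, and it is the one point the paper itself passes over silently.
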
 
\begin{proof}
This is simply a calculation
\begin{align*}
&{\tr}_{(\vect{j},t)\in\Gamma} \left(\opgauge[O] \cdot \bigotimes_{(\vect{j},t)\in\Gamma} \ket{0}\bra{0} \right) 
\\
&= 
\sum_{\mathcal{S}_\Gamma}\bigotimes_{(\vect{i},q)\in\Gamma} X_{\vect{i},q}^{s_i^q} \ O \bigotimes_{(\vect{i},q)\in\Gamma} X_{\vect{i},q}^{s_i^q} \ \prod_{(\vect{j},t)\in\Gamma} \delta \left(\sum_{\zfields^\dagger \vect{x}^{\vect{i}} \basis{q} \ni \vect{x}^\vect{j} \basis{t}}  s_\vect{i}^q \,\right)
\\
&=O 
\end{align*}
the only nontrivial step is realizing that whenever the variables $s_\vect{i}^q$ satisfy the $\delta$ condition the operator $\bigotimes\limits_{(\vect{i},q)\in\Gamma} X_{\vect{i},q}^{s_i^q}$ is a symmetry and hence commutes with $O$ by assumption.
\end{proof}

\begin{lemma}  \label{lem2}
The operator $\gauge^\dagger \gauge$ projects onto the symmetric subspace.
\end{lemma}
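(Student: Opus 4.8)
The plan is to compute the operator $\gauge^\dagger \gauge$ directly from its definition $\gauge\ket{\psi} = \proj\,\ket{\psi}\otimes\ket{0}^{\otimes NT}$ and to identify it with the projector onto the subspace of $\hilb_\text{m}$ fixed by all the tensor-product $X$ symmetries. First I would write $\gauge^\dagger\gauge = \left(\bra{0}^{\otimes NT}\otimes\openone_\text{m}\right)\proj^\dagger\proj\left(\openone_\text{m}\otimes\ket{0}^{\otimes NT}\right)$, and use that $\proj$ is a projector (a product of commuting local projectors $\proj_{\vect{i},q}$), so $\proj^\dagger\proj=\proj$. Next I would expand $\proj=\prod_{\vect{i},q}\frac{1}{2}(\projx_{\vect{i},q}(0)+\projx_{\vect{i},q}(1))$ into a sum over binary variables $\{s_\vect{i}^q\}$ of products $\bigotimes_{\vect{i},q}\projx_{\vect{i},q}(s_\vect{i}^q)$, exactly as in the definition of $\opproj_\Gamma$ and in the proof of Proposition~\ref{lem1}.

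The key step is then to sandwich each such term between $\ket{0}$ and $\bra{0}$ on the gauge qubits. The Pauli operator $\projmod\,\vect{x}^\vect{i}\basis{q}$ acts as $X_{\vect{i},q}$ on the matter qubit together with $X$ on all gauge qubits in $\zfields^\dagger\vect{x}^\vect{i}\basis{q}$; hence the gauge part of $\bigotimes_{\vect{i},q}\projx_{\vect{i},q}(s_\vect{i}^q)$ is a product of $X$ operators on gauge qubits, and $\bra{0}^{\otimes NT}(\cdots)\ket{0}^{\otimes NT}$ is nonzero (and equal to $1$) precisely when every gauge qubit is acted on trivially, i.e. when $\sum_{\zfields^\dagger\vect{x}^\vect{i}\basis{q}\ni\vect{x}^\vect{j}\basis{t}} s_\vect{i}^q = 0$ for every gauge label $(\vect{j},t)$. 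This is the same $\delta$-constraint that appears in Proposition~\ref{lem1}. Under that constraint the surviving matter operator $\bigotimes_{(\vect{i},q)}X_{\vect{i},q}^{s_\vect{i}^q}$ is by construction one of the symmetry operators $X(\symset)$ (this is the content of the `global symmetry from local symmetries' discussion just above the lemma). Therefore
\begin{align*}
\gauge^\dagger\gauge = \frac{1}{2^{NT}}\sum_{\{s_\vect{i}^q\}\,:\,\zfields^\dagger\text{-closed}} X(\symset_{\{s\}}),
\end{align*}
an equal-weight sum over all symmetry group elements (with multiplicity), which is the standard group-averaging formula for the projector onto the symmetric subspace.

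To finish I would note two points. First, the normalization: the number of solutions $\{s_\vect{i}^q\}$ to the $\zfields^\dagger$-closure condition equals $2^{NQ}/|M|$ where $|M|$ is the number of distinct symmetries (the symmetry group being $\mathrm{coker}$ of the appropriate map), and each distinct symmetry is hit the same number of times, so the prefactor collapses to $1/|M|$ and we obtain exactly $\frac{1}{|M|}\sum_{g} g$, the projector onto the $+1$ eigenspace common to all symmetries. Second, one should check that $\gauge$ really maps into $\hilb_\text{m}\otimes\hilb_\text{g}$ as claimed and that the infinite/thermodynamic-limit bookkeeping (the ``number of unit cells $N$'') is handled the same way as elsewhere in the paper, i.e. on a finite torus. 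The main obstacle I anticipate is purely bookkeeping: carefully matching the sum over local $\projx$-variables to the sum over the symmetry group, and pinning down the combinatorial multiplicity so that the normalization comes out to give an honest projector rather than a positive multiple of one; the operator-algebra content is entirely contained in Proposition~\ref{lem1} and the preceding remark.
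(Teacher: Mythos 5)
Your proof is correct and follows essentially the same route as the paper: expand $\proj$ into a sum of local $X$-type Pauli strings, let the $\ket{0}$ overlaps on the gauge qubits impose the $\zfields^\dagger$-closure ($\delta$) condition, and recognize the surviving matter operators as precisely the symmetry operators $X(\symset)$ summed with equal weight (the paper keeps a double sum $\mathcal{S}_\Lambda,\bar{\mathcal{S}}_\Lambda$ instead of invoking $\proj^\dagger\proj=\proj$, but this is cosmetic). Your closing normalization bookkeeping is in the right spirit but slightly off --- the prefactor from expanding $\proj$ is $1/2^{NQ}$ (one factor of $1/2$ per matter qubit, not per gauge qubit), and each distinct symmetry is hit exactly once since $\{s_\vect{i}^q\}$ determines the operator, so $\gauge^\dagger\gauge$ comes out as $|\symset|/2^{NQ}$ times the group-average projector; the paper glosses over this normalization entirely, so you are no worse off than the published argument.
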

\begin{proof}
Again this is simply a calculation
\begin{align*}
\gauge^\dagger \gauge = 
\sum_{\mathcal{S}_\Lambda} \sum_{\bar{\mathcal{S}}_\Lambda} \bigotimes_{(\vect{i},q)\in\Lambda} X_{\vect{i},q}^{s_\vect{i}^q+\bar{s}_\vect{i}^q} \ \prod_{(\vect{j},t)} \delta \left( \sum_{\zfields^\dagger \vect{x}^{\vect{i}} \basis{q} \ni \vect{x}^\vect{j} \basis{t}}  s_\vect{i}^q+\bar{s}_\vect{i}^q \,\right)
\end{align*}
where $\Lambda$ is the full interaction graph (note the sums are only over matter qubits $(\vect{i},q)$ and the product is over the gauge qubits $(\vect{j},t)$). Observe that due to the $\delta$ condition this is by definition the sum over all symmetries of the model. 
\end{proof}

\begin{lemma}  \label{lem3}
The identity $\opgauge[O]\gauge=\gauge O$ holds for any symmetric local operator $O$.
\end{lemma}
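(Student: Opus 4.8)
The plan is to prove the intertwining identity $\opgauge[O]\gauge = \gauge O$ by expanding both sides in terms of their definitions and showing they produce identical operators on $\hilb_\text{m}\otimes\hilb_\text{g}$. Since $\gauge\ket{\psi} = \proj\ket{\psi}\otimes\ket{0}^{\otimes NT}$ and $\proj = \prod_{\vect{i},q}\proj_{\vect{i},q}$ is itself a sum over all symmetries of the model (this is essentially the content computed in the proof of Lemma~\ref{lem2}, restricted to a single copy), the right-hand side $\gauge O$ applies $O$ first and then symmetrizes. The left-hand side applies $\gauge$ first (producing a symmetrized state) and then applies $\opgauge[O] = \opproj_\Gamma[O\bigotimes_{v\in\Gamma}\ket{0}\bra{0}_v]$, which inserts $O$ on the matter qubits in $\Gamma$, the projectors $\ket{0}\bra{0}$ on the gauge qubits in $\Gamma$, and symmetrizes over $\Gamma$. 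The strategy is to track which $X$-configurations survive on each side and match them.

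First I would write out $\gauge O\ket{\psi}$ explicitly: using the expansion of $\proj$ as $\frac{1}{2^{NT}}$ times a sum over gauge-qubit configurations with the corresponding $X$-strings on the matter qubits (equivalently, a normalized sum over all $X(\symset)$ with $\symset$ a symmetry region, tensored with appropriate gauge-qubit flips acting on $\ket{0}^{\otimes NT}$), then commute this symmetrization structure against $O$ on the right. Second, I would write out $\opgauge[O]\gauge\ket{\psi}$: apply $\gauge$ to get the symmetrized matter state with $\ket{0}$ on all gauge qubits, then apply $\opproj_\Gamma$, whose $X$-strings on $\Gamma$ only give a nonzero contribution when the gauge-qubit flips land back on $\ket{0}$ (forcing the matter-qubit flips in $\Gamma$ to satisfy the $\delta$-condition, i.e.\ to be a symmetry restricted to $\Gamma$), and the $\bigotimes_{v\in\Gamma}\ket{0}\bra{0}_v$ factor fixes the gauge qubits in $\Gamma$. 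Third, I would invoke the key observation already used in Proposition~\ref{lem1}: when the $s_\vect{i}^q$ satisfy the $\delta$-condition, $\bigotimes_{(\vect{i},q)\in\Gamma}X_{\vect{i},q}^{s_\vect{i}^q}$ is a genuine symmetry operator and hence commutes with $O$ — so the symmetrization over $\Gamma$ in $\opproj_\Gamma$ can be moved through $O$ and absorbed into (or matched against) the global symmetrization coming from $\gauge$ acting on a symmetric state $\ket{\psi}$, or more directly against $\gauge O\ket{\psi}$ since $O$ commutes with the remaining symmetries.

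Fourth, the bookkeeping step: the global symmetrization in $\gauge$ can be split into the part supported inside $\Gamma$ and the part supported outside $\Gamma$; the inside part is exactly what $\opgauge[O]$ reconstructs (with the gauge qubits in $\Gamma$ returning to $\ket{0}$), and the outside part passes through $O$ trivially because $O$ is supported in $\Gamma$. Matching normalization factors ($2^{-|\Gamma\cap\text{matter}|}$ from $\opproj_\Gamma$ against the corresponding factor in $\proj$) closes the identity. The main obstacle I expect is the careful handling of the \emph{restriction} notation $\big{|}_\Gamma$ together with the choice of $\Gamma$ as a minimal region closed under the $\zfields$-adjacency: one must verify that every symmetry of the original model that is supported in $\supp(O)$ extends to a symmetry region fitting inside $\Gamma$, and conversely that the $\delta$-condition in $\opproj_\Gamma$ over the qubits of $\Gamma$ produces exactly these and no spurious configurations. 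Once that combinatorial matching of symmetry supports is pinned down, the rest is the same kind of direct computation as in Proposition~\ref{lem1} and Lemma~\ref{lem2}.
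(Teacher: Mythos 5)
Your plan is correct and follows essentially the same route as the paper: expand $\opgauge[O]\gauge$ directly, let the overlap on the gauge qubits in $\Gamma$ enforce the $\delta$-condition, and observe that the resulting conjugating $X$-string is then a symmetry and hence commutes with $O$, leaving $\gauge O$. The only slight imprecision is the phrase that $\gauge\ket{\psi}$ leaves the gauge qubits in $\ket{0}$ (after $\proj$ acts they are correlated with the matter $X$-strings), but your subsequent description of how the $\ket{0}\bra{0}_v$ factors produce the $\delta$-condition shows you have the mechanism right.
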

\begin{proof}
Once again this is simply a calculation 
\begin{align*}
&\opgauge[O]\gauge =
\\
&
\sum_{\bar{\mathcal{S}}_\Gamma}   \bigotimes_{(\vect{i},q)\in\Gamma} X_{\vect{i},q}^{\bar{s}_\vect{i}^q} 
\ O 
 \bigotimes_{(\vect{i},q)\in\Gamma} X_{\vect{i},q}^{\bar{s}_\vect{i}^q} 
 \bigotimes_{(\vect{j},t)\in\Gamma}  \left|\sum_{\zfields^\dagger \vect{x}^{\vect{i}} \basis{q} \ni \vect{x}^\vect{j} \basis{t}}  \bar{s}_\vect{i}^q \,\right\rangle
 \\
 &
\left\langle\sum_{\zfields^\dagger \vect{x}^{\vect{i}} \basis{q} \ni \vect{x}^\vect{j} \basis{t}}  \bar{s}_\vect{i}^q \,\right|
\sum_{{\mathcal{S}}_\Lambda} \bigotimes_{(\vect{i},q)\in\Lambda}  X_{\vect{i},q}^{{s}_\vect{i}^q} 
  \bigotimes_{(\vect{j},t)\in\Lambda}  \left|{\sum_{\zfields^\dagger \vect{x}^{\vect{i}} \basis{q} \ni \vect{x}^\vect{j} \basis{t}}  s_\vect{i}^q \,}\right\rangle
\\ 
&
=
\sum_{\bar{\mathcal{S}}_\Gamma} \sum_{{\mathcal{S}}_\Lambda} \bigotimes_{(\vect{i},q)\in\Lambda} 
 X_{\vect{i},q}^{{s}_\vect{i}^q} \  \bigotimes_{(\vect{i},q)\in\Gamma} X_{\vect{i},q}^{s_\vect{i}^q+\bar{s}_\vect{i}^q} 
\ O 
 \bigotimes_{(\vect{i},q)\in\Gamma} X_{\vect{i},q}^{s_\vect{i}^q+\bar{s}_\vect{i}^q} 
 \\
 &
\prod_{(\vect{j},t)\in\Gamma}  \delta \left(\sum_{\zfields^\dagger \vect{x}^{\vect{i}} \basis{q} \ni \vect{x}^\vect{j} \basis{t}}  s_\vect{i}^q+\bar{s}_\vect{i}^q \, \right) 
\  \bigotimes_{(\vect{j},t)\in\Lambda}  \left|{\sum_{\zfields^\dagger \vect{x}^{\vect{i}} \basis{q} \ni \vect{x}^\vect{j} \basis{t}}  s_\vect{i}^q \,}\right\rangle
\\
&
=G O
\end{align*}
where the final step follows from the $\delta$ condition and the symmetry of $O$.
\end{proof}

\begin{claim}
Any matrix element of a local symmetric operator $O$ taken with respect to a symmetric state $\ket{\psi_0}$ and an arbitrary state $\ket{\psi_1}$ is preserved by the gauging procedure i.e. $\bra{\psi_0} O \ket{\psi_1}=\bra{\psi_0}\gauge^\dagger\, \opgauge[O] \, \gauge\ket{\psi_1}$.
\end{claim}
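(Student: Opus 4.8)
The plan is to reduce this matrix-element identity to the three lemmas already established. The key observation is that $\ket{\psi_0}$ is symmetric, so by Lemma~\ref{lem2} we have $\gauge^\dagger \gauge \ket{\psi_0} = \ket{\psi_0}$ (the projector onto the symmetric subspace fixes $\ket{\psi_0}$), and likewise $\bra{\psi_0} \gauge^\dagger \gauge = \bra{\psi_0}$. First I would take the adjoint of Lemma~\ref{lem3}, which gives $\gauge^\dagger \opgauge[O] = O \gauge^\dagger$ for symmetric $O$ (using that $\opgauge[O]$ is Hermitian when $O$ is, which holds since $O$ is a symmetric Pauli-type operator and the local projectors $\projx_{\vect{i},q}(s)$ are Hermitian). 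Then the chain
\begin{align*}
\bra{\psi_0}\gauge^\dagger\, \opgauge[O] \, \gauge\ket{\psi_1}
= \bra{\psi_0}\, O\, \gauge^\dagger \gauge\ket{\psi_1}
\end{align*}
follows immediately.

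The remaining step is to dispose of the trailing $\gauge^\dagger \gauge$. Here I cannot directly use Lemma~\ref{lem2} on $\ket{\psi_1}$ since that state is arbitrary, but I do not need to: instead I push $O$ back to the left of the (already-applied) symmetric state. Since $O$ is symmetric it commutes with every symmetry operator $X(\symset)$, and by Lemma~\ref{lem2} the operator $\gauge^\dagger \gauge$ is a sum of such symmetries; hence $O$ commutes with $\gauge^\dagger \gauge$. Therefore
\begin{align*}
\bra{\psi_0}\, O\, \gauge^\dagger \gauge\ket{\psi_1}
= \bra{\psi_0}\, \gauge^\dagger \gauge\, O\ket{\psi_1}
= \bra{\psi_0}\, O\ket{\psi_1},
\end{align*}
where the last equality again uses $\bra{\psi_0}\gauge^\dagger\gauge = \bra{\psi_0}$ from Lemma~\ref{lem2}. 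Alternatively, and perhaps more cleanly, one observes that $O\ket{\psi_1}$ need not be symmetric, so the cleanest route is: $\bra{\psi_0} O \gauge^\dagger\gauge \ket{\psi_1}$; insert $\gauge^\dagger\gauge$ is a projector $\Pi_{\text{sym}}$; write $\bra{\psi_0} = \bra{\psi_0}\Pi_{\text{sym}}$; use $[\Pi_{\text{sym}},O]=0$; collapse $\Pi_{\text{sym}}^2 = \Pi_{\text{sym}}$ and $\bra{\psi_0}\Pi_{\text{sym}} O = \bra{\psi_0} O$. Either bookkeeping works.

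The main obstacle, such as it is, is purely one of hypotheses: I need $\opgauge[O]$ to be Hermitian so that adjointing Lemma~\ref{lem3} gives the stated identity, and I need $O$ to genuinely commute with all the symmetries appearing in $\gauge^\dagger\gauge$ — both are covered by the standing assumption that $O$ is a local symmetric operator (a sum of symmetric Pauli terms), since the local gauge projectors are built from Hermitian Paulis. No new computation is required beyond combining the three lemmas; the only care needed is tracking on which side the symmetric state sits when invoking Lemma~\ref{lem2}, since that lemma is only useful against $\ket{\psi_0}$ and not against the arbitrary state $\ket{\psi_1}$.
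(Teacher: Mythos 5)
Your proof is correct and rests on the same two ingredients as the paper, namely Lemma~\ref{lem2} and Lemma~\ref{lem3}. The paper's version is slightly more economical: it applies Lemma~\ref{lem3} as stated on the right, $\opgauge[O]\gauge\ket{\psi_1}=\gauge O\ket{\psi_1}$, and then uses Lemma~\ref{lem2} only against the symmetric bra $\bra{\psi_0}$, which avoids your detour through the Hermiticity of $\opgauge[O]$ and the commutation of $O$ with the symmetric projector (both of which are true, so your route also goes through).
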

\begin{proof}
We have
\begin{align*}
\bra{\psi_0}\gauge^\dagger\, \opgauge[O] \, \gauge\ket{\psi_1} &=\bra{\psi_0}\gauge^\dagger\gauge\, O\ket{\psi_1} \\
&=\bra{\psi_0} O \ket{\psi_1}
\end{align*}
where the first equality follows from Lemma~\ref{lem3} and the second from Lemma~\ref{lem2}.
\end{proof}

\begin{lemma} \label{lem4}
The states $\{\gauge \ket{\lambda}\}$, for a basis $\{\ket{\lambda}\}$ of $H_\text{m}$, span the ground space of $\Delta_\bdyn H_\bdyn+\Delta_P H_P$ with $\Delta_\bdyn,\Delta_P>0$. 
\end{lemma}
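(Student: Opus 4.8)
The plan is to show that the ground space of $\Delta_\bdyn H_\bdyn+\Delta_\proj H_\proj$ equals the simultaneous $+1$ eigenspace $V$ of all the commuting Pauli operators out of which $H_\bdyn$ and $H_\proj$ are built, and then to identify $V$ with $\text{span}\{\gauge\ket{\lambda}\}=\text{im}(\gauge)$.

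First I would record the commutation structure. $H_\proj$ is a sum of the local projectors $\proj_{\vect{i},q}$ onto the $+1$ eigenspaces of the mutually commuting $X$-type operators $\projmod\vect{x}^\vect{i}\basis{q}$, so its ground space is the gauge-invariant subspace $\text{im}(\proj)$; and $H_\bdyn$ is a sum of $Z$-type operators supported on the gauge qubits with profiles $\kernn\vect{x}^\vect{i}\basis{a}$, which mutually commute and, because $\zfields\kernn=0$, commute with every $\projmod\vect{x}^\vect{i}\basis{q}$. Hence $\Delta_\bdyn H_\bdyn+\Delta_\proj H_\proj$ is a sum of two commuting operators built from commuting Paulis, so once the intersection $V$ of their ground spaces is shown nonempty it is, for all $\Delta_\bdyn,\Delta_\proj>0$, exactly the ground space. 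The inclusion $\text{im}(\gauge)\subseteq V$ — which settles $V\neq\emptyset$ as well as one direction of the lemma — is immediate: $\gauge\ket{\lambda}=\proj(\ket{\lambda}\otimes\ket{0}^{\otimes NT})$ lies in $\text{im}(\proj)$, and each $H_\bdyn$ term commutes with $\proj$ and fixes $\ket{0}^{\otimes NT}$, hence fixes $\gauge\ket{\lambda}$.

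The substantive part is the reverse inclusion $V\subseteq\text{im}(\gauge)$, which I would obtain by passing through the disentangling isometry $\disent$. A short computation with its $CX$ circuit $W$ (controls on matter qubits, targets on gauge qubits) gives $\disent^\dagger\disent=\proj$ and $\disent\disent^\dagger=\openone_\text{g}$, so $\disent$ restricts to a unitary from $\text{im}(\proj)$ onto $\hilb_\text{g}$; the same computation shows that $W$ conjugates each $H_\bdyn$ term to itself, the would-be extra $Z$-factor on the matter qubits having profile $\zfields\kernn\vect{x}^\vect{i}\basis{a}=0$. Consequently $\disent$ carries $V$ onto the $+1$ eigenspace $V'\subseteq\hilb_\text{g}$ of all the gauge-qubit operators with profile $\kernn\vect{x}^\vect{i}\basis{a}$, and it carries $\text{im}(\gauge)$ onto the span of $\{\disent\gauge\ket{x}=2^{-NQ/2}\ket{\zfields^\dagger x}_\text{g}\}$ over a computational basis $\{\ket{x}\}$ of $\hilb_\text{m}$ — that is, onto the span of the gauge computational-basis states labelled by $\text{im}(\zfields^\dagger)$, viewed as a $\zt$-linear map on the finite lattice. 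The claim then reduces to the $\zt$-linear identity $\text{im}(\zfields^\dagger)=(\text{ker}\,\zfields)^\perp$: the inclusion $\subseteq$ is the adjoint relation $\langle\zfields\mathbf{w},x\rangle=\langle\mathbf{w},\zfields^\dagger x\rangle$, and equality follows by comparing dimensions using $\text{rank}\,\zfields^\dagger=\text{rank}\,\zfields$. (The dimension of $V$ found this way is consistent with Lemma~\ref{lem2}, which forces $\dim\,\text{im}(\gauge)$ to equal the dimension of the symmetric subspace of $\hilb_\text{m}$.)

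I expect the crux to be precisely this last identification, and more precisely the hypothesis behind it: that the flatness generators genuinely exhaust $\text{ker}\,\zfields$ \emph{on the finite system} — restricting a generating set of a kernel to the torus need not again generate the kernel — and that the operators with profile $\kernn\vect{x}^\vect{i}\basis{a}$ are independent. These properties are part of the construction of $\kernn$ in Sec.~\ref{gauging procedure}; granting them, the remaining ingredients (the $CX$-conjugation bookkeeping, the evaluation of $\disent\gauge\ket{x}$, and the rank count) are routine.
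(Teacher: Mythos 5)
Your proof is correct and arrives at the same conclusion under the same hypothesis as the paper, but by a genuinely different route. The paper argues directly in $\hilb_\text{m}\otimes\hilb_\text{g}$: it writes the ground space of $H_\proj$ as $\text{im}(\proj)$, notes that the computational-basis gauge configurations surviving the $H_\bdyn$ condition are exactly those in $\ker(\kernn^\dagger)$, invokes the exactness assumption $\ker(\kernn^\dagger)=\text{im}(\zfields^\dagger)$, and then uses the identity $\proj\,\projx_{\vect{i},q}(1)\big{|}_\text{g}=\proj\,\projx_{\vect{i},q}(1)\big{|}_\text{m}$ to push the resulting gauge-register $X$ strings onto the matter register, where they are absorbed into a relabelling of $\ket{\lambda}$. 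You instead conjugate by the disentangler $\disent$, which collapses the reverse inclusion to a single $\zt$-linear statement on the gauge register, $\text{im}(\zfields^\dagger)=(\ker\zfields)^\perp$, proved by the adjoint relation plus rank counting over the finite lattice. Both arguments hinge on the same nontrivial input --- that the flatness generators exhaust $\ker\zfields$ on the finite system, equivalently $\ker(\kernn^\dagger)=\text{im}(\zfields^\dagger)$ --- which the paper states as an explicit standing assumption (``we only treat the exact case\dots'') and which you correctly isolate as the crux rather than something derivable for free. Your route buys a cleaner separation between routine $CX$ bookkeeping and the one linear-algebra fact doing the work, and makes the dimension count transparent; the paper's route avoids introducing $\disent$ into the proof and stays closer to the operators $\gauge$, $\opgauge$ used in the surrounding lemmas.
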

\begin{proof}
As discussed in Section~\ref{gauging procedure} the ground space of $H_P$ is spanned by the states $\{P \ket{\lambda}\otimes\ket{\psi}\}$ for bases $\{\ket{\lambda}\},\{\ket{\psi}\}$ of $\hilb_\text{m},\hilb_{\text{g}}$  respectively. 
To restrict to the ground space of $H_\bdyn$ we consider the computational basis for $\hilb_\text{g}$ which consists of states $\ket{\mathcal{S}_\Lambda}=\bigotimes\limits_{\vect{i},q}X^{s^q_\vect{i}}_{\vect{i},q}\, \ket{0}^{\otimes |\Lambda|}$. Since each local field in $H_\bdyn$ commutes with $P$ the combined ground space is spanned by states $\{ P \ket{\lambda} \otimes \ket{\mathcal{S}_\Lambda}\} $ where $[\bigotimes\limits_{\vect{i},q}X^{s^q_\vect{i}}_{\vect{i},q},H_B]=0$; i.e. the $X$ terms correspond to polynomials in the kernel of $\mu^\dagger$. 
We only treat the exact case where $\text{ker}(\mu^\dagger)=\text{im}(\zfields^\dagger)$ (this is always true for our constructions from topologically ordered CSS codes see Sec.~\ref{CSS}, if this assumption is loosened one must deal more carefully with the ground space~\cite{williamson2014matrix}) then the only relevant states in $\hilb_\text{g}$ are generated by a Pauli operator of the form $\zfields^\dagger \mathbf{p}$. 

Note we have the relation $P P_{\vect{i},q}(1) = P$ hence 
\begin{align} \label{Prel}
P\, P_{\vect{i},q}(1)\big{|}_\text{g} = P\, P_{\vect{i},q}(1)\big{|}_\text{m}  
\end{align}
where $\big{|}_\text{g/m}$ denotes the restriction of the operator onto the gauge or matter qubits respectively. 
Since any Pauli operator specified by $\zfields^\dagger \mathbf{p}$ is of the form $\prod\limits_{\vect{i},q}  P_{\vect{i},q}(1)\big{|}_\text{g}$ the ground space of $\Delta_\bdyn H_\bdyn+\Delta_P H_P$ is spanned by states 
\begin{align*}
P \ket{\lambda}\otimes  \prod\limits_{\vect{i},q}  P_{\vect{i},q}(1)\big{|}_\text{g} \ket{0}^{\otimes |\Lambda|} &
\\
=
 P  \prod\limits_{\vect{i},q}&  P_{\vect{i},q}(1)\big{|}_\text{m} \ket{\lambda}\otimes \ket{0}^{\otimes |\Lambda|}
 \\
 = \gauge \prod\limits_{\vect{i},q}&  P_{\vect{i},q}(1)\big{|}_\text{m} \ket{\lambda}
\end{align*}
where we have used Eq.\eqref{Prel}.
Hence the ground space is spanned by states of the form $\{ \gauge \ket{\lambda} \}$. 
\end{proof}

\begin{claim} \label{ggap}
The gauging procedure preserves a gap; i.e. if $H_\text{m}$ has a uniform constant energy gap then $H_\text{full}$ does too, provided the constants $\Delta_\bdyn,\Delta_P$ are sufficiently large.
\end{claim}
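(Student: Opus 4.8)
The plan is to leverage the pairwise commutativity $[H_\text{m}^\opgauge,H_\bdyn]=[H_\bdyn,H_P]=[H_P,H_\text{m}^\opgauge]=0$, which lets one diagonalise $H_\text{full}$ block by block inside the joint eigenspaces of $H_\bdyn$ and $H_P$. First I would observe that $\Delta_\bdyn H_\bdyn+\Delta_P H_P$ is a sum of mutually commuting terms ($H_P$ a sum of the local gauge projectors, $H_\bdyn$ a sum of commuting $Z$-type fields on the gauge qubits), so after normalising each term to have eigenvalues $\{0,1\}$ and shifting its ground energy to $0$ its spectrum lies in $\{0\}\cup[\min(\Delta_\bdyn,\Delta_P),\infty)$, with ground eigenspace $\mathcal{G}=\mathrm{span}\{\gauge\ket{\lambda}\}$ by Lemma~\ref{lem4}; this gap is uniform in the system size.

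Next I would analyse $H_\text{full}$ on $\mathcal{G}$. Since $H_\text{m}^\opgauge$ commutes with $\Delta_\bdyn H_\bdyn+\Delta_P H_P$ it preserves $\mathcal{G}$, and applying Lemma~\ref{lem3} to each (symmetric) local term gives $H_\text{m}^\opgauge\gauge=\gauge H_\text{m}$, while Lemma~\ref{lem2} gives $\gauge^\dagger\gauge=\Pi$, the projector onto the symmetric subspace of $\hilb_\text{m}$. Hence $\gauge$ is a partial isometry with initial space the symmetric subspace and final space $\mathcal{G}$ that intertwines $H_\text{m}|_{\mathrm{sym}}$ with $H_\text{m}^\opgauge|_{\mathcal{G}}$, so these operators are isospectral. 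Because $H_\text{m}$ commutes with every symmetry its spectrum on the symmetric subspace is a sub-multiset of its full spectrum, and since the symmetry one gauges acts within the ground space (so $H_\text{m}$ has a symmetric ground state), $H_\text{m}|_{\mathrm{sym}}$ — hence $H_\text{full}|_{\mathcal{G}}$ — inherits the ground energy $E_0$ and the gap $\Delta_\text{m}$ of $H_\text{m}$.

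It then remains to bound $H_\text{full}$ from below away from $\mathcal{G}$, where $\Delta_\bdyn H_\bdyn+\Delta_P H_P\ge\min(\Delta_\bdyn,\Delta_P)$, so one only needs a lower bound on $H_\text{m}^\opgauge$. After adding a constant to each $h_{\vect{i},k}$ to make it positive semidefinite — which alters neither the gap of $H_\text{m}$ nor that of $H_\text{full}$ and makes $E_0\ge 0$ — the gauged term $\opgauge[h_{\vect{i},k}]$ stays positive semidefinite because the gauging map on a single term is completely positive (a sum of conjugations by Pauli operators followed by projection with $\ket{0}\bra{0}$ on the gauge qubits); in fact, working in the gauge computational basis, in which $H_\text{m}^\opgauge$ is diagonal, each block is, up to a nonnegative integer prefactor, a Pauli conjugate of $H_\text{m}$ or of a restriction of it, so $\min\mathrm{spec}(H_\text{m}^\opgauge)\ge E_0$. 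Thus $H_\text{full}\ge E_0+\min(\Delta_\bdyn,\Delta_P)$ on $\mathcal{G}^{\perp}$, and choosing $\Delta_\bdyn,\Delta_P$ so large that $\min(\Delta_\bdyn,\Delta_P)>\Delta_\text{m}$ forces the ground space of $H_\text{full}$ into $\mathcal{G}$ and fixes the gap at $\Delta_\text{m}$, uniformly in system size.

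The step I expect to be the real obstacle is precisely this last lower bound: a priori a single gauge- or flatness-constraint violation could be overcompensated by an \emph{extensive} decrease of $H_\text{m}^\opgauge$, which would spoil the gap. Pinning down $\min\mathrm{spec}(H_\text{m}^\opgauge)$ therefore hinges on the block structure of $H_\text{m}^\opgauge$ in the gauge computational basis together with complete positivity of $\opgauge$, with some care needed when the kernel of $\zfields$ is nontrivial — there the preimage used in $\opgauge$ is ambiguous, but the competing choices differ only by elements of $H_\bdyn$ (generated by $\kernn$), so they agree on $\mathcal{G}$ and leave the estimate intact; for frustration-free $H_\text{m}$ the whole argument is immediate.
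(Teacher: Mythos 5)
Your proposal is correct and follows essentially the same route as the paper: Lemma~\ref{lem4} identifies the ground space of $\Delta_\bdyn H_\bdyn+\Delta_P H_P$ with $\mathrm{span}\{\gauge\ket{\lambda}\}$, and Lemma~\ref{lem3} (with Lemma~\ref{lem2}) shows $\gauge$ intertwines $H_\text{m}$ on the symmetric subspace with $H_\text{m}^\opgauge$ on that ground space, giving $\Delta_\text{full}\geq\min(\Delta_\text{m},\Delta_\bdyn,\Delta_P)$. The only difference is that you additionally try to justify the lower bound on $H_\text{m}^\opgauge$ off the constraint ground space, which the paper simply asserts; your block-diagonal analysis in the gauge computational basis is the right picture there, though strictly it only yields $H_\text{m}^\opgauge\geq 0$ after shifting the terms (a block is a sum of Pauli-conjugated local terms with some terms absent and with differing multiplicities, not a conjugate of all of $H_\text{m}$), which suffices exactly in the frustration-free case you flag and matches the paper's implicit assumption.
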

\begin{proof}
By Lemma~\ref{lem4}, for $\Delta_\bdyn,\Delta_P>0$ sufficiently large, the ground space of $H_\text{full}$ is spanned by states of the form $\{ \gauge \ket{\lambda} \}$. Since $H_\text{m}^\opgauge=\sum \limits_{\vect{i},k} \opgauge_{\Gamma^k_{\vect{i}}}[h_{\vect{i},k}]$ is a sum of gauged local operators Lemma~\ref{lem3} implies that, for any matter eigenstate $H_\text{m}\ket{\lambda}=\lambda \ket{\lambda}$, we have $H_\text{m}^\opgauge \gauge\ket{\lambda}=\lambda \gauge\ket{\lambda}$. 
Hence $H_\text{full}$ has the same lowest eigenvalue as $H_\text{m}$ (assuming a symmetric ground state) and gap $\Delta_\text{full}\geq\min(\Delta_\text{m},\Delta_\bdyn,\Delta_P)$.
\end{proof}

We remark that Proposition~\ref{ggap} implies that gauging defines a function from the set of gapped phases of the ungauged model into the set of gapped phases of the gauged model. That is, Hamiltonians from the same symmetry protected phase must land in the same phase of the gauged model. 

\subsection{Properties of the gauging complex}

Recall the \emph{gauging complex} 
\begin{align}
\begin{CD}
\hmodule @>\ham>> \pmodule @>\conj>> \emodule
\end{CD}
\end{align}
defined in terms of the maps involved in the gauging procedure $\ham=\zfields^\dagger\oplus \kernn$.
By focusing on the maps in the gauging complex one can infer interesting relationships between quantities pre- and post- gauging. 

Firstly any product of the generating symmetric $Z$ fields which multiplies to identity in the initial model gives an element $\mathbf{p}\in\text{ker}(\zfields)$ and hence a $Z$ symmetry of the gauged model. Furthermore when the gauging complex is exact and the gauged model is topologically ordered $\text{ker}(\zfields)$ is generated by the map $\kernn$ and hence $\mathbf{p}=\kernn\, \mathbf{r},\, \exists \mathbf{r}$. Then $\kernn$ describes the minimal local $Z$-fields that commute with the gauged $X$ terms.

Secondly any $X$ symmetry of the initial model is an element $\mathbf{p}\in\text{ker}(\zfields^\dagger)$ which specifies a product of $X$ stabilizers equal to the identity, i.e. a redundant $X$ stabilizer, in the gauged model. This is relevant in the calculation of the number of qubits encoded into the ground space of the gauged model, which also requires information about redundant $Z$ stabilizers. 

Notice that the gauging procedure is in fact a duality map, in that applying it twice takes us back to the original model. 
To achieve this duality we consider gauging the $Z$ symmetry,  generated by $\kernn$, of the gauged model. 
The local $X$ fields commuting with this symmetry are generated by $\zfields^\dagger$, any product of them equal to identity is in $\text{ker}(\zfields^\dagger)$ by definition and gives a symmetry of the twice gauged model. Let $\xfields$ be a local map generating $\text{ker}(\zfields^\dagger)$, then $\xfields$ describes the independent local $X$ fields that commute with the $Z$ stabilizers of the twice gauged model. These twice gauged $Z$ terms are given by $\zfields$. Note these are precisely the local commuting $Z$ and $X$ fields in the the initial stabilizer Hamiltonian. 
This suggests an addition to the picture of the gauging complex, completing the circle of gauging
\begin{align}
\begin{CD}
\hmodule @>\ham>> \pmodule @>\conj>> \emodule \\
@| @. @|\\
\hat\emodule @<<\hat{\conj}< \hat\pmodule @<<\hat{\sigma}< \hat\hmodule 
\end{CD}
\end{align}
where $\hat{\sigma}:=\xfields\oplus\zfields, \, \&\ \hat{\conj}=\hat{\sigma}^\dagger\lambda_Q$.

Collecting these facts together, we note the number of encoded qubits in the ungauged model is 
$$N[Q-T+S_Z-S_X]+C_\text{m}$$ 
where $N$ is the number of unit cells, $Q$ is the number of matter qubits per site, $T=\text{rank}(\zfields)$ is the number of local $Z$ stabilizers, $S_Z=\text{rank}(\kernn)$ is the number of redundant $Z$ stabilizers locally, $S_X=\text{rank}(\xfields)$ is the number of independent local $X$ symmetries and $C_\text{m}$ accounts for global products of $X$  and $Z$ stabilizers that multiply to the identity upon taking closed bounday conditions for the matter model.
The number of encoded qubits in the gauged model is given by 
$$N[T-Q+S_X-S_Z]+C_\text{g}$$ 
where $T$ now corresponds to the number of gauge qubits per site, $Q$ is the number of local $X$ stabilizers, $S_X$ is the number of redundant $X$ stabilizers locally, $S_Z$ is the number of independent local $Z$ stabilizers and again $C_\text{g}$ accounts for global products of $X$  and $Z$ stabilizers that multiply to the identity upon taking closed boundary conditions for the gauged model.

\subsection{A construction from CSS stabilizer Hamiltonians} \label{CSS}

In light of the above discussion it is clear that from a complex corresponding to a topological CSS stabilizer code
\begin{align}
\begin{CD}
\hmodule @>\ham>> \pmodule @>\conj>> \emodule
\end{CD}
\end{align}
where $\sigma=\sigma_X \oplus \sigma_Z$, one can read off a gauging duality. 
This duality is specified in our language by the maps $\zfields=\sigma_X^\dagger$ and $\kernn = \sigma_Z$. Hence the ungauged Hamiltonian is generated by $\xfields\oplus\sigma_X^\dagger$ (with $\text{im}(\xfields)=\text{ker}(\sigma_X)$) with local symmetric $X$ field perturbations $(\openone_Q , \, \vline\, \mathbf{0} )$. 

From this analysis we see that if $\text{ker}(\sigma_X)$ is locally trivial, in the sense that it contains no local elements,  then the ungauged Hamiltonian possesses only global symmetries and the stabilizers are all $Z$ fields (see the examples in Section~\ref{examples}). 
This point highlights a difference between the cubic code and generalized toric codes, while both the respective ungauged variants may have a growing number of global $X$ symmetries (one for each redundant $X$ stabilizer) the former has no local symmetries whereas the latter has an extensive number. This is relevant to the distinct behaviours of their ground state degeneracies. We speculate that it is indicative of spatially extended vs. isolated pointlike excitations

\subsection{Cluster state construction \& gauging}

We now go slightly beyond CSS stabilizers and consider cluster state models built on bipartite graphs specified by the map $\zfields$ from the gauging procedure for some CSS Hamiltonian.  By construction this cluster state inherits the $X$ symmetry of the input ungauged model (corresponding to $\text{ker}(\zfields^\dagger)$) on one sublattice and an $X$ symmetry on the other sublattice in the position of each $Z$ symmetry of the input gauged model. 
This cluster model is clearly short-range entangled (SRE) since it can be mapped to a trivial decoupled model via a local circuit of $CZ$s. However this disentangling does not respect the symmetries. Hence these cluster states are candidates for higher form or fractal symmetry-protected topological (SPT)~\cite{1Done,chen2013symmetry,chen2012symmetry,schuch2011classifying,pollmann2010entanglement} states. 

Several different approaches could be taken when gauging these cluster models. 
We take advantage of the natural bipartite structure of the system and treat the two disjoint sublattices separately. Since the terms appearing on a single sublattice are either single $X$s or products of $Z$s, generated by $\zfields$ or $\zfields^\dagger$ respectively, one can instantly read off the effect of gauging one sublattice. 
Specifically it results in a doubling of the qubits on the remaining sublattice with each $X,Z$ field on that lattice now accompanied by a $Z,X$ term, respectively, on the new partner qubit. 

$Z$ fields generated by either $\kernn$ or $\xfields$, depending on the sublattice gauged, are also added to the new qubits. These intermediate models can possess topological order since they are equivalent to either the input gauged or ungauged model under a local circuit of $CZ$s. However these gates do not respect the symmetry on the remaining sublattice, which is indicative of the possibility of symmetry-enriched topological (SET)~\cite{bombin2010topological,mesaros2013classification,hung2013quantized,barkeshli2014symmetry,tarantino2015symmetry,kirillov2004g,turaev2000homotopy} order. 

Once both sublattices have been gauged one can easily see that the model is mapped to itself up to local swaps and Hadamards.

\section{Fractal Symmetries}
\label{examples}

In this section we present several examples which consist of pairs of models that are dual under gauging and support interesting symmetries.

\subsection{2D toric code - Ising model}
\label{toric code}
The toric code is a CSS code that is known to be equivalent to a $\mathbb{Z}_2$ gauge theory (see Section 3 of Ref.\cite{kitaev2003fault} and also Ref.\cite{kitaev2010topological}).
In the polynomial language the toric code is generated by ${\ham_X=(x+xy,y+xy)},\ \ham_Z=(1+x,1+y)$, graphically
\begin{align}
 \xymatrix@!0{
 IX \ar@{-}[r]  &XX \ar@{-}[d]\\
II \ar@{-}[u]   &XI \ar@{-}[l] 
 }
 \quad \quad
 \xymatrix@!0{
 IZ \ar@{-}[r]  &  II \ar@{-}[d] \\
 ZZ \ar@{-}[u]   &ZI \ar@{-}[l]
 } .
\end{align}
The map $\zfields=\sigma_X^\dagger$ determined by the $X$ stabilizers corresponds to the 2D Ising model, generated by the terms $(0,1+y),\ (0,1+x)$
\begin{align}\label{sixtn}
 \xymatrix@!0{
 Z \ar@{-}[r]  &  I \ar@{-}[d] \\
 Z \ar@{-}[u]   & I \ar@{-}[l]
 } 
 \quad \quad
 \xymatrix@!0{
 I \ar@{-}[r]  & I \ar@{-}[d]\\
Z \ar@{-}[u]   & Z \ar@{-}[l] 
 }
\end{align}
with $X$ perturbations generated by $(1,0)$.

Note that when gauging the Ising model one encounters a situation in which $\text{ker}(\zfields)$ is nontrivial and is generated by $\ham_Z$ as expected. To see this explicitly consider a product of terms from Eq.\eqref{sixtn} around a plaquette that yields the identity.

\subsection{3D cubic code - fractal symmetry Ising model}
\label{cubic code}

The cubic code is a 3D CSS code generated by 
$$
\ham_X=
\begin{pmatrix}
x+y+z+xyz \\
1+y+xy+yz
\end{pmatrix} \,
\ham_Z=
\begin{pmatrix}
x+y+xy+xyz \\
1+xy+yz+xz
\end{pmatrix}
$$
or graphically
\begin{align}
\drawgenerator{XI}{II}{IX}{XI}{IX}{XX}{XI}{IX}
\quad
\drawgenerator{ZI}{ZZ}{IZ}{ZI}{IZ}{II}{ZI}{IZ}
\end{align}
the gauging map $\zfields$ specifies the ungauged cubic code, a type of `Ising' model, with $Z$ stabilizers generated by 
$${(0,1+xy+xz+yz)},\ {(0,x+z+xz+xyz)}$$
\begin{align}
\drawgenerator{I}{Z}{Z}{I}{Z}{I}{I}{Z}
\quad
\drawgenerator{Z}{Z}{I}{Z}{I}{I}{Z}{I}
\end{align}
and single qubit $X$ perturbations.
This model has a fractal $X$ symmetry for each product of stabilizers equal to the identity in the cubic code.

\subsection{Self dual cluster models}
\label{fractal cluster}

The first cluster model is derived from the 2D toric code and has stabilizers
\begin{align}
& \xymatrix@!0{
 \red{IZ} \ar@{.}[rr] &  &  \red{II} \ar@{.}[dd] \\
 &  \blue{(X)} \ar@{-}[d] \ar@{-}[u] \ar@{-}[l] \ar@{-}[r] & \\
 \red{ZZ} \ar@{.}[uu] &  &\red{ZI} \ar@{.}[ll]
 } 
 \\
& \hspace{.15cm}  \xymatrix@!0{
 \blue{Z} \ar@{-}[rr] & &  \blue{I} \ar@{-}[dd] \\
 &  \red{(XI)} \ar@{.}[d] \ar@{.}[u] \ar@{.}[l] \ar@{.}[r] & \\
 \blue{Z} \ar@{-}[uu] &  &\blue{I} \ar@{-}[ll]
 } 
 \quad \quad
   \xymatrix@!0{
 \blue{I} \ar@{-}[rr] & &  \blue{I} \ar@{-}[dd] \\
 &  \red{(IX)} \ar@{.}[d] \ar@{.}[u] \ar@{.}[l] \ar@{.}[r]  & \\
 \blue{Z} \ar@{-}[uu] &  &\blue{Z} \ar@{-}[ll]
 } 
 \end{align}
where the matter sublattice has a single qubit per site (blue) and the gauge sublattice has two (red). 
This model has a 1D (1-form) $X$ symmetry on the red sublattice and a global (0-form) $X$ symmetry on the blue sublattice.

The previous example fits into a broad class of cluster states in arbitrary dimension $\sdim$ with qubits on $(k-1)$- and $(k)$- cells. These cluster states are constructed on the bipartite adjacency graph of these cells and possess $(d-k)$- and $(k-1)$- form $X$ symmetry. 
\\

The second cluster state model comes from the cubic code and has stabilizers
\begin{align}
&\drawgeneratorclustertwoa{\red{ZI}}{\red{ZZ}}{\red{IZ}}{\red{ZI}}{\red{IZ}}{\red{II}}{\red{ZI}}{\red{IZ}}{ \blue{X}}
\\
&\drawgeneratorclustertwob{\blue{I}}{\blue{I}}{\blue{Z}}{\blue{I}}{\blue{Z}}{\blue{Z}}{\blue{I}}{\blue{Z}}{\red{XI}} \quad
\drawgeneratorclustertwob{\blue{Z}}{\blue{I}}{\blue{I}}{\blue{Z}}{\blue{I}}{\blue{Z}}{\blue{Z}}{\blue{I}}{\red{IX}}
\end{align}
using the same sublattice conventions as above. One can see by inspecting the pictures that all translations of these terms commute.

This model supports fractal $X$ symmetries on each sublattice. It inherits an $X$ symmetry on the red sublattice for each $X$ symmetry of the cubic code and an $X$ symmetry on the blue sublattice for each $X$ symmetry of the ungauged cubic code.

\section{ Conclusions}

In this paper we have defined a gauging procedure for general submanifold symmetries, including those of fractal type, within the framework of Pauli Hamiltonians. 
We demonstrated relations between the pre- and post- gauging models reminiscent of those obtained via the conventional gauging procedure. 
Using the tools developed in this process we constructed short-range entangled Ising and cluster models with fractal symmetries and examined their transformation under gauging. 

This gauging procedure constitutes a small step towards adapting the standard tools from the condensed matter toolbox for application to more exotic 3D topological orders, including cases where the common sense assumptions leading to a TQFT description are not satisfied~\cite{haah2014bifurcation}. 
We are optimistic that this path leads to a cache of strange and exotic phases of matter beyond (conventional) TQFTs, citing Haah's cubic code~\cite{haah2011local} as a demonstrative example. 

Our approach opens the door to more general constructions and a possible relation between SRE fractal-symmetric and exotic topological phases similar to the well known connection between SPT phases and Dijkgraaf-Witten theories~\cite{DijkgraafWitten,williamson2014matrix}. 
In particular the gauging procedure applied to a subgroup of the global symmetry allows one to construct and study fractal SET phases. The fractal symmetries in this context may play a role in understanding the most general transversal gates in topological codes~\cite{beigi2011quantum,yoshida2015gapped} and a connection of these phases of matter to quantum computation~\cite{else2012symmetry,williamson2015symmetry}.
\\
\\
\\
\emph{Acknowledgments -} The author acknowledges Frank Verstraete, Norbert Schuch and especially Jeongwan Haah for helpful discussions (and for the use of his figures), Jacob Bridgeman and Sam Roberts for comments, trivia with the CB and is grateful to the Berkeley Simons Institute and the organizers of the Quantum Hamiltonian Complexity program where this work originated and also to the quantum theory group at the University of Sydney where it was written up.

\bibliography{CC}

\end{document}